\newtheorem{theorem}{Theorem}[section] 
\newtheorem{corollary}[theorem]{Corollary}
\newtheorem{proposition}[theorem]{Proposition}
\newtheorem{definition}[theorem]{Definition}
\newtheorem{example}[theorem]{Example}
\begin{document}
\title{Uniform continuity of POVMs}
\author{Roberto Beneduci\thanks{e-mail rbeneduci@unical.it}\\
{\em Dipartimento di Fisica,}\\
{\em Universit\`a della Calabria,}\\
and\\
{\em Istituto Nazionale di Fisica Nucleare, Gruppo c. Cosenza,}\\
{\em 87036 Arcavacata di Rende (Cs), Italy,}\\
}
\date{}
\maketitle
\begin{abstract}
Recently a characterization of uniformly continuous POVMs and a necessary condition for a uniformly continuous POVM $F$ to have the norm-1 property have been provided. 
 Moreover it was proved that in the commutative case, uniform continuity corresponds to the existence of a Feller Markov kernel. We apply such results to the analysis of some relevant physical examples; i.e.,  the phase space localization observables, the unsharp phase observable and the unsharp number observable of which we study the uniform continuity, the norm-1 property and the existence of a Feller Markov kernel. 


\end{abstract}

\section{Introduction}
\label{intro}
Positive operator valued measures (POVM) are a key mathematical object in the modern formulation of quantum mechanics. They are a natural consequence of the analysis of the statistics of the quantum measurement process and can be used as the mathematical representatives of the quantum observables \cite{Ali3,Busch,Davies,Holevo,Schroeck}. A very interesting example of application of POVMs can be found in quantum information theory, where a measurement is generally described by a POVM.  We recall that the set of projection valued measures (PVM) is a subset of the set of POVMs and that there is a one-to-one correspondence between PVMs and self-adjoint operators. The observables described by POVMs are usually called unsharp observables while the ones described by PVMs are called sharp observables. 

 It is worth remarking that the use of POVMs opens new possible ways of analysis of the quantum realm. For example, the position and momentum observables are jointly measurable when described by POVMs \cite{Davies,Holevo,Ali4,Busch,Prugovecki,Schroeck} and that allows an interpretation of the Heisenberg inequality in terms of the inaccuracy of the joint measurement of position and momentum. Such an interpretation is not compatible with the formalism of standard quantum mechanics where position and momentum are described by self-adjoint operators \cite{Holevo}. 

Moreover, there are observables (e.g. photon localization, phase, time) for which a description in terms of PVMs is ruled out by their very nature.  For example, that the localization of the photon cannot be represented by a PVM is a consequence of the transversality condition on the electromagnetic field \cite{Kraus,Ali3}.

All that raises the problem of the mathematical characterization and physical interpretation of the POVMs. Unfortunately, from the mathematical point of view, such a problem is much more involved than the problem of the characterization of PVMs. Anyway, a sharp characterization is possible in the commutative case.  
Indeed, it was pointed out \cite{Ali,Holevo,Naimark,B1,B2,B3,B4,B5,B6,B7,B8,B9,B10,B11,P} that there is a strong relationship between commutative POVMs and PVMs. In particular, it was recently proved \cite{B1} that a POVM $F$ is commutative if and only if  there exist a spectral measure $E$ and a Feller Markov kernel $\mu_{(\cdot)}(\cdot):\Gamma\times\mathcal{B}(\mathbb{R})\to[0,1]$, $\Gamma\subset\sigma(A)$, $E(\Gamma)=\mathbf{1}$, such that 

\begin{equation}\label{Feller}
 F(\Delta)=\int_{\Gamma}\mu_{\Delta}(\lambda)\,dE_{\lambda}
\end{equation}

\noindent
and $\mu_{\Delta}(\cdot)$ is continuous for each $\Delta\in R$ where, $R\subset\mathcal{B}(\mathbb{R})$ is a ring which generates the Borel $\sigma$-algebra of the reals $\mathcal{B}(\mathbb{R})$. That is interesting not only from the mathematical point of view but also from the physical point of view since there are important observables which are described by commutative POVMs as for example the unsharp position and momentum observables. 

A second mathematical question of physical relevance is connected to the continuity property of the POVMs. Indeed, there are relevant physical examples \cite{B2,B12,B13,Schroeck} of observables described by POVMs that are continuous with respect to the uniform operator topology. Recently, necessary and sufficient conditions for the uniform continuity have been given and the connections with the concepts of absolute continuity and norm-1 property have been analyzed \cite{B2}. Moreover, it was proved \cite{B1} that a commutative POVM $F$ is uniformly continuous if and only if the Markov kernel in equation (\ref{Feller}) can be chosen to be a strong Feller Markov kernel.
         
The aim of the present paper is to describe the results in Ref.s \cite{B1,B2} that we have just outlined and to analyze some physically relevant examples in the light of such results. 


\noindent 



In particular, we analyze 
the phase space localization observables, the unsharp phase observable and the unsharp number observable. We also show \cite{B2} that, in several cases, the uniform continuity of a POVM $F$ allows a direct analysis of the norm-1 property. That is tested on the previous examples.

\section{Definition and main properties of POVMs}
\noindent
In what follows, we denote by $\mathcal{B}(X)$ the Borel $\sigma$-algebra of a topological space $X$. 
By $\textbf{0}$ and $\textbf{1}$ the null and the identity operators, by $\mathcal{L}_s(\mathcal{H})$ the space of all bounded self-adjoint linear operators acting in a Hilbert space $\mathcal{H}$ with scalar product $\langle\cdot,\cdot\rangle$, by $\mathcal{F}(\mathcal{H})=\mathcal{L}_s^+(\mathcal{H})$ the subspace of all positive, bounded self-adjoint operators on $\mathcal{H}$, by $\mathcal{E}(\mathcal{H})\subset\mathcal{F}(\mathcal{H})$ the subspace of all projection operators on $\mathcal{H}$. 

\begin{definition}
\label{POV}
A Positive Operator Valued measure (for short, POVM) is a map $F:\mathcal{B}(X)\to\mathcal{
F}(\mathcal{H})$
such that:
    \begin{equation*}
    F\big(\bigcup_{n=1}^{\infty}\Delta_n\big)=\sum_{n=1}^{\infty}F(\Delta_n).
    \end{equation*}
    \noindent 
 where, $\{\Delta_n\}$ is a countable family of disjoint
    sets in $\mathcal{B}(X)$ and the series converges in the weak operator topology. It is said to be normalized if 
\begin{equation*}   
    F(X)={\bf{1}}
\end{equation*}
\end{definition}    
\begin{definition}
    A POVM is said to be commutative if
    \begin{equation}
    \big[F(\Delta_1),F(\Delta_2)\big]={\bf{0}},\,\,\,\,\forall\,\Delta_1\,,\Delta_2\in\mathcal{B}(X).
    \end{equation}
    \end{definition}

   \begin{definition}
   A POVM is said to be orthogonal if
    \begin{equation}\label{orthogonal}
    F(\Delta_1)F(\Delta_2)={\bf{0}}\,\,\,\hbox{if}\,\,\Delta_1\cap\Delta_2=
    \emptyset.
    \end{equation}
\end{definition}
\begin{definition}\label{PVM}
A Spectral measure or Projection Valued measure (for short, PVM) is an orthogonal, normalized POVM.
\end{definition}
\noindent
Let $E$ be a PVM. By equation (\ref{orthogonal}), 
$$\mathbf{0}=E(\Delta)E(X-\Delta)=E(\Delta)[\mathbf{1}-E(\Delta)]=E(\Delta)-E(\Delta)^2.$$ 
\noindent
We can then restate definition \ref{PVM} as follows.
\begin{definition}
A PVM $E$ is a POVM such that $E(\Delta)$ is a projection operator for each $\Delta\in\mathcal{B}(X)$.
\end{definition}

\noindent
In quantum mechanics, non-orthogonal normalized POVM are also called \textbf{generalised} or \textbf{unsharp} observables and PVM \textbf{standard} or \textbf{sharp} observables. 

\noindent
In what follows, we shall always refer to normalized POVMs and we shall use the term ``measurable'' for the Borel measurable functions.
For any vector $x\in\mathcal{H}$ the map
$$\langle F(\cdot)x,x\rangle \,:\,\mathcal{B}(X)\to {\mathbb R} ,
\qquad
\Delta \mapsto \langle F(\Delta)x,x\rangle,$$
is a Lebesgue-Stieltjes measure. 
In the following, we shall use the symbol $d\langle F_{\lambda}x,x\rangle$ to mean integration with respect to the measure $\langle F(\cdot)x,x\rangle$.
\noindent
A measurable function $f:N\subset X\to f(N)\subset\mathbb{R}$ is said to be almost everywhere (a.e.) one-to-one with respect to a POVM $F$ if it is one-to-one on a subset $N'\subset N$ such that $F(N-N')=\mathbf{0}$.
A function $f:X\to\mathbb{R}$ is bounded with respect to a POVM $F$, if it is equal to a bounded function $g$ a.e. with respect to $F$, that is, if $f=g$ a.e. with respect to the measure $\langle F(\cdot)x,x\rangle$,  $\forall x \in \mathcal{H}$.
\noindent
For any real, bounded and measurable function $f$ and for any POVM $F$, there is a unique \cite{Berberian} bounded self-adjoint operator $B\in\mathcal{L}_s(\mathcal{H})$ such that
\begin{equation}
\label{6}
\langle Bx,x\rangle=\int f(\lambda)d\langle F_{\lambda}x,x\rangle,\quad\text{for each}\quad x\in\mathcal{H}.
\end{equation}
If equation (\ref{6}) is satisfied, we write $B=\int f(\lambda)dF_{\lambda}$ or $B=\int f(\lambda)F(d\lambda)$ equivalently. 
\noindent
\begin{definition}
The spectrum $\sigma(F)$ of a POVM $F$ is the closed set 
$$\{x\in{X}:\,F(\Delta)\neq\mathbf{0},\,\forall\Delta\,\text{open},\,x\in\Delta\}.$$
\end{definition}
\noindent
\noindent
By the spectral theorem \cite{Reed}, there is a one-to-one correspondence between PV measures $E$ with spectrum in $\mathbb{R}$ and self-adjoint operators $B$,
the correspondence being given by
$$B=\int\lambda dE^B_{\lambda}.$$
\noindent
Notice that the spectrum of $E^B$ coincides with the spectrum of the corresponding self-adjoint operator $B$. Moreover, in this case a functional calculus can be developed. Indeed, if $f:{\mathbb R}\to{\mathbb R}$ is a measurable real-valued function, we can define the self-adjoint operator \cite{Reed}
$$f(B)=\int f(\lambda) dE^B_{\lambda}$$
\noindent
where, $E^B$ is the PVM corresponding to $B$. If $f$ is bounded, then $f(B)$ is bounded \cite{Reed}.
\noindent

\noindent
In the following we do not distinguish between a self-adjoint operator and the corresponding PVM and use the symbols $w-\lim$ and $u-\lim$ to denote the limit in the weak operator topology and the limit in the uniform operator topology respectively.
\begin{definition}
A POVM is regular if for any Borel set $\Delta$, 
$$w-\lim_{i\to\infty}F(G_j)=F(\Delta)=w-\lim_{i\to\infty}F(O_j)$$ 
\noindent
where, $\{G_j\}_{j\in\mathbb{N}}$, $\Delta\subset G_j$, is a decreasing family of open sets and $\{O_j\}_{j\in\mathbb{N}}$, $O_j\subset\Delta$, is a increasing family of compact sets and the convergence is in the weak operator topology.   
\end{definition}
\noindent
We recall \cite{Munkres} that  a topological space $(X,\tau)$ is second countable if it has a countable basis for its topology $\tau$; i.e., if there is a countable subset $\mathcal{B}$ of $\tau$ such that each member of $\tau$ is the union of members of $\mathcal{B}$. 
\begin{proposition}
A POVM defined on a Hausdorff locally compact, second countable space $X$ is regular. 
\end{proposition}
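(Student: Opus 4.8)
The plan is to reduce the operator statement to classical regularity of the finite scalar Borel measures $\mu_x(\cdot):=\langle F(\cdot)x,x\rangle$, $x\in\mathcal H$, and then to glue the $x$-dependent approximations into one sequence valid for all $x$. I would begin with two topological remarks about $X$: being locally compact Hausdorff (hence regular) and second countable it is metrizable by Urysohn's theorem, and being second countable it is Lindel\"of, so together with local compactness it is $\sigma$-compact, $X=\bigcup_k K_k$ with $K_k$ compact and increasing. On a $\sigma$-compact metric space every finite Borel measure is outer regular by open sets and inner regular by compact sets; since $\mu_x(X)=\langle F(X)x,x\rangle<\infty$, each $\mu_x$ is such a measure. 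I would also record that, by polarization, $F(\Delta_j)\to F(\Delta)$ in the weak operator topology iff $\mu_x(\Delta_j)\to\mu_x(\Delta)$ for every $x$, and that $F(\Delta_j)\to F(\bigcap_j\Delta_j)$ (resp.\ $F(\bigcup_j\Delta_j)$) weakly whenever $\Delta_j$ decreases (resp.\ increases), by continuity from above (resp.\ below) of each finite $\mu_x$.

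Now fix a Borel set $\Delta$. The substantive step is to construct a single decreasing sequence of open sets $G_j\supset\Delta$ and a single increasing sequence of compact sets $O_j\subset\Delta$ serving all $x$ simultaneously. Taking $\mathcal H$ separable (as is customary for quantum observables), I would fix a countable dense subset $\{x_n\}$ of its unit ball and form the finite Borel measure $\mu:=\sum_n 2^{-n}\mu_{x_n}$. By outer regularity of $\mu$ I would choose open $U_k\supset\Delta$ with $\mu(U_k\setminus\Delta)<1/k$ and set $G_j:=U_1\cap\cdots\cap U_j$; by inner regularity of $\mu$ I would choose compact $L_k\subset\Delta$ with $\mu(\Delta\setminus L_k)<1/k$ and set $O_j:=L_1\cup\cdots\cup L_j$. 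Then $\{G_j\}$ is decreasing, open and contains $\Delta$, $\{O_j\}$ is increasing, compact and contained in $\Delta$, and $\mu_{x_n}(G_j\setminus\Delta)\le 2^n\mu(G_j\setminus\Delta)\to 0$, $\mu_{x_n}(\Delta\setminus O_j)\le 2^n\mu(\Delta\setminus O_j)\to 0$ for each $n$.

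To pass from the dense set $\{x_n\}$ to all of $\mathcal H$ I would invoke the elementary bound $|\mu_y(A)-\mu_{y'}(A)|\le\|F(A)\|\,(\|y\|+\|y'\|)\,\|y-y'\|\le(\|y\|+\|y'\|)\|y-y'\|$, valid for every Borel set $A$ with a constant independent of $A$, which makes $y\mapsto\mu_y(A)$ continuous uniformly on bounded sets; combined with density of $\{x_n\}$ it gives $\mu_x(G_j\setminus\Delta)\to 0$ and $\mu_x(\Delta\setminus O_j)\to 0$ for every $x$ in the unit ball, hence by homogeneity for every $x\in\mathcal H$. Consequently $\langle F(G_j)x,x\rangle=\langle F(\Delta)x,x\rangle+\langle F(G_j\setminus\Delta)x,x\rangle\to\langle F(\Delta)x,x\rangle$ and $\langle F(O_j)x,x\rangle=\langle F(\Delta)x,x\rangle-\langle F(\Delta\setminus O_j)x,x\rangle\to\langle F(\Delta)x,x\rangle$ for all $x$, i.e.\ $w-\lim_j F(G_j)=F(\Delta)=w-\lim_j F(O_j)$. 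Since $\Delta$ is an arbitrary Borel set this proves $F$ regular.

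I expect the only non-routine point to be precisely this simultaneity over all $x\in\mathcal H$: the scalar regularity of each $\mu_x$ by itself produces only an $x$-dependent approximating sequence, and the dominating measure $\mu=\sum_n 2^{-n}\mu_{x_n}$ together with the uniform-on-bounded-sets continuity of $x\mapsto\langle F(A)x,x\rangle$ are what collapse the uncountably many conditions to one sequence. The remaining ingredients are standard: regularity of finite Borel measures on $\sigma$-compact metric spaces, and the implication ``second countable $+$ locally compact Hausdorff $\Rightarrow$ metrizable and $\sigma$-compact''.
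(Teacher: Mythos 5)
Your proof is correct (granting separability of $\mathcal H$, which you state explicitly and which is the standard setting here), but it takes a genuinely different route from the paper's. The paper's proof is a short chain of citations: a locally compact Hausdorff space is (topologically) regular, hence metrizable by Urysohn when second countable; second countability gives $\sigma$-compactness; on a metrizable $\sigma$-compact space the Borel and Baire sets coincide (Berberian, p.~25); and every Baire POVM is regular (Berberian, Theorem 18). You share the topological reductions (metrizability, $\sigma$-compactness) but then replace the appeal to Berberian's theorem by a self-contained measure-theoretic argument: classical regularity of each finite scalar measure $\mu_x=\langle F(\cdot)x,x\rangle$, a dominating finite measure $\mu=\sum_n 2^{-n}\mu_{x_n}$ over a countable dense subset of the unit ball to manufacture a \emph{single} decreasing sequence of open sets and a \emph{single} increasing sequence of compact sets, and the uniform estimate $|\mu_y(A)-\mu_{y'}(A)|\le(\|y\|+\|y'\|)\|y-y'\|$ (valid since $\|F(A)\|\le 1$ for a normalized POVM) to pass from the dense set to all of $\mathcal H$. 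What your route buys is that it produces exactly the objects demanded by the paper's sequential, weak-operator-limit definition of regularity, and it isolates the one genuinely non-routine point --- that the approximating sequences must work for all $x$ simultaneously, which scalar regularity alone does not give. What the paper's route buys is brevity and freedom from the separability hypothesis, at the price of importing Berberian's machinery and leaving implicit the translation between his formulation of regularity and the sequential one adopted in the text. The only caveat to record against your version is the explicit restriction to separable $\mathcal H$, on which your dominating-measure construction genuinely depends.
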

\begin{proof}
A  locally compact Hausdorff space is regular. (See Ref. \cite{Munkres}, page 205).  By the Urysohn's theorem, a second countable regular space is metrizable (see \cite{Munkres}, page 215). Moreover, the second countability implies the $\sigma$-compactness (\cite{Munkres}, page 289). In a metrizable $\sigma$-compact space the ring of Borel sets coincides with the ring of Baire sets (see page 25 in \cite{Berberian}) and the thesis comes from the fact that each Baire POVM is regular (see Theorem 18 in \cite{Berberian}). 
\end{proof}

\section{Uniform continuity of POVMs}
In the present section we single out conditions for a POVM to be uniformly continuous. We also analyze the concept of absolute continuity of POVMs and show the relationships with the concept of uniform continuity. At last we give a necessary condition for the norm-1 property of a uniformly continuous POVM. In the next section we give a necessary and sufficient condition for a commutative POVM to be uniformly continuous. 

\begin{definition}\cite{B2}
Let $F:\mathcal{B}(X)\to\mathcal{F(H)}$ be a POVM. 
$F$ is said to be uniformly continuous at $\Delta$ if, for any disjoint decomposition $\Delta=\cup_{i=1}^{\infty}\Delta_i$, 
$$\lim_{n\to\infty}\sum_{i=1}^{n}F(\Delta_i)=F(\Delta)$$
\noindent
 in the uniform operator topology. $F$ is said uniformly continuous if it is uniformly continuous at each $\Delta\in\mathcal{B}(X)$.
\end{definition}
We recall that a sequence of sets $\Delta_i$ is increasing if $\Delta_i\subset\Delta_{i+1}$. The limit operation $\lim_{i\to\infty}\Delta_i=\cup_{i=1}^\infty\Delta_i=\Delta$ is usually denoted by $\Delta_i\uparrow\Delta$. Analogously, $\Delta_i$ is a decreasing family of sets if $\Delta_{i+1}\subset\Delta_i$ and $\Delta_i\downarrow\Delta$ denotes the limit operation $\lim_{i\to\infty}\Delta_i=\cap_{i=1}^\infty\Delta_i=\Delta$.
\begin{proposition}\cite{B2}\label{up}
A POVM $F$ is uniformly continuous at $\Delta$ if and only if it is uniformly continuous from below at $\Delta$, i.e., for any increasing sequence $\Delta_i\uparrow\Delta$, 
\begin{equation*}
\lim_{n\to\infty}\| F(\Delta)-F(\Delta_i)\|=0.
\end{equation*}   
\noindent
$F$ is uniformly continuous if and only if it is uniformly continuous from below at each $\Delta$.
\end{proposition}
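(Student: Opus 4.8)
The plan is to show that the two notions coincide because passing from a disjoint countable decomposition of $\Delta$ to its sequence of partial unions, and conversely passing from an increasing sequence exhausting $\Delta$ to the sequence of its successive differences, sets up a correspondence under which the two defining convergences are literally the same statement. The only structural input needed beyond the definitions is finite additivity of $F$, which I would record first: from $F(\emptyset)=F(\emptyset\cup\emptyset)=F(\emptyset)+F(\emptyset)$ one gets $F(\emptyset)=\mathbf{0}$, and then applying countable additivity to a disjoint family that is eventually empty yields $F\big(\bigcup_{i=1}^{n}\Delta_i\big)=\sum_{i=1}^{n}F(\Delta_i)$ for any finite disjoint collection in $\mathcal{B}(X)$.

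For the implication ``uniformly continuous at $\Delta$ $\Rightarrow$ uniformly continuous from below at $\Delta$'', I would take an arbitrary increasing sequence $\Delta_i\uparrow\Delta$ and set $\Lambda_1=\Delta_1$ and $\Lambda_i=\Delta_i\setminus\Delta_{i-1}$ for $i\ge 2$. These are pairwise disjoint Borel sets with $\bigcup_{i=1}^{\infty}\Lambda_i=\Delta$ and $\bigcup_{i=1}^{n}\Lambda_i=\Delta_n$, so finite additivity gives $\sum_{i=1}^{n}F(\Lambda_i)=F(\Delta_n)$. Uniform continuity of $F$ at $\Delta$ applied to the decomposition $\{\Lambda_i\}$ says exactly that $\sum_{i=1}^{n}F(\Lambda_i)\to F(\Delta)$ in norm, i.e. $\|F(\Delta)-F(\Delta_n)\|\to 0$. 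Since the increasing sequence was arbitrary, $F$ is uniformly continuous from below at $\Delta$.

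For the converse, I would take an arbitrary disjoint decomposition $\Delta=\bigcup_{i=1}^{\infty}\Delta_i$ and put $\Sigma_n=\bigcup_{i=1}^{n}\Delta_i$. Then $\Sigma_n\uparrow\Delta$, and finite additivity gives $F(\Sigma_n)=\sum_{i=1}^{n}F(\Delta_i)$. Uniform continuity from below at $\Delta$ yields $\|F(\Delta)-F(\Sigma_n)\|\to 0$, which is precisely the assertion that the partial sums $\sum_{i=1}^{n}F(\Delta_i)$ converge to $F(\Delta)$ in the uniform operator topology; hence $F$ is uniformly continuous at $\Delta$. Quantifying this pointwise equivalence over all $\Delta\in\mathcal{B}(X)$ then gives the global statement.

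I do not anticipate a serious obstacle here: the argument is essentially bookkeeping, and the only points that need a line of care are the verification that $F$ is finitely additive (so that partial sums of operators are genuinely $F$ evaluated on partial unions) and that the telescoping differences remain in $\mathcal{B}(X)$, which is immediate since a $\sigma$-algebra is closed under relative complements. If anything, one should double-check that the definition of ``uniformly continuous at $\Delta$'' really quantifies over \emph{all} disjoint countable decompositions of $\Delta$, since that is exactly what lets the two arbitrary families appearing in the two directions be matched up.
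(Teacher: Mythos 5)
Your proof is correct and is essentially the same bookkeeping argument the paper relies on (the proposition is quoted from \cite{B2} without a proof here): one passes between disjoint decompositions and increasing sequences via partial unions and successive differences, using finite additivity of $F$ to identify the partial sums $\sum_{i=1}^{n}F(\Delta_i)$ with $F$ evaluated on the partial unions. Your preliminary verification that $F(\emptyset)=\mathbf{0}$ and that $F$ is finitely additive is the right (and only) point needing care, and you handle it correctly.
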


\begin{proposition}\cite{B2}\label{down}
$F$ is uniformly continuous if and only if, 
$\lim_{i\to\infty}\| F(\Delta_i)\|=0$ whenever $\Delta_i\downarrow\emptyset$.
\end{proposition}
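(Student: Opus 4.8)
The plan is to derive this characterization from Proposition \ref{up}, which already reduces uniform continuity to uniform continuity from below at every $\Delta$. First I would prove the forward implication: assume $F$ is uniformly continuous and let $\Delta_i\downarrow\emptyset$. Set $\Delta=\Delta_1$ and write $\Delta=\bigcup_{i=1}^{\infty}(\Delta_i\setminus\Delta_{i+1})$, a disjoint decomposition (using that $\bigcap_i\Delta_i=\emptyset$, so every point of $\Delta_1$ leaves the nested family at some finite stage). Uniform continuity at $\Delta_1$ gives $\bigl\|F(\Delta_1)-\sum_{i=1}^{n-1}F(\Delta_i\setminus\Delta_{i+1})\bigr\|\to 0$; since $\sum_{i=1}^{n-1}F(\Delta_i\setminus\Delta_{i+1})=F(\Delta_1)-F(\Delta_n)$ by finite additivity, this says exactly $\|F(\Delta_n)\|\to 0$.

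For the converse, suppose $\|F(\Delta_i)\|\to 0$ whenever $\Delta_i\downarrow\emptyset$, and let $\Gamma_i\uparrow\Gamma$ be an arbitrary increasing sequence. Put $\Delta_i=\Gamma\setminus\Gamma_i$; then $\Delta_i\downarrow\emptyset$, and by finite additivity $F(\Gamma)-F(\Gamma_i)=F(\Delta_i)$, whose norm tends to $0$ by hypothesis. Hence $F$ is uniformly continuous from below at every $\Gamma$, and Proposition \ref{up} yields uniform continuity of $F$.

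The only genuine point requiring care is the measurability/bookkeeping in the forward direction: one must check that $\Delta_i\setminus\Delta_{i+1}$ are Borel (immediate, since $\mathcal{B}(X)$ is a $\sigma$-algebra) and genuinely disjoint with union $\Delta_1$, for which the hypothesis $\bigcap_i\Delta_i=\emptyset$ is exactly what is needed. Everything else is finite additivity of $F$ together with the definitions, so I expect no real obstacle; the main subtlety is simply being careful that the telescoping identity $F(\Delta_1)-F(\Delta_n)=\sum_{i=1}^{n-1}F(\Delta_i\setminus\Delta_{i+1})$ is invoked correctly and that one reduces to $\Delta=\Delta_1$ rather than trying to handle a general $\Delta$ directly.
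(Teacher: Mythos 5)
Your proof is correct: the forward direction via the disjoint decomposition $\Delta_1=\bigcup_i(\Delta_i\setminus\Delta_{i+1})$ and the telescoping identity, and the converse via $\Gamma\setminus\Gamma_i\downarrow\emptyset$ combined with Proposition \ref{up}, is exactly the standard argument. The paper itself gives no proof here (it cites Ref.~\cite{B2}), but your reduction to Proposition \ref{up} is the natural and expected route, with all the set-theoretic bookkeeping handled correctly.
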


Now, we introduce the concept of absolute continuity. 
\begin{definition}
Let $F:\mathcal{B}(X)\to\mathcal{F(H)}$ be a POVM and $\nu:X\to\mathbb{R}$ a regular measure. $F$ is absolutely continuous with respect to $\nu$ if there exists a number $c$ such that 
\begin{equation*}
\|F(\Delta)\|\leq c\,\nu(\Delta),\quad \forall\Delta\in\mathcal{B}(X).
\end{equation*}
\end{definition}
\begin{theorem}\cite{B2}\label{abs}
Let $F$ be absolutely continuous with respect to a regular finite measure $\nu$. Then, $F$ is uniformly continuous.
\end{theorem}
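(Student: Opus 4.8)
The plan is to reduce the statement to Proposition~\ref{down}, which characterizes uniform continuity in terms of the behavior of $\|F(\Delta_i)\|$ along sequences $\Delta_i\downarrow\emptyset$. So the whole argument amounts to showing: if $\Delta_i\downarrow\emptyset$, then $\|F(\Delta_i)\|\to 0$. Given the hypothesis, this is immediate once we control $\nu(\Delta_i)$.

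First I would take an arbitrary decreasing sequence $\Delta_i\downarrow\emptyset$ of Borel sets. By absolute continuity there is a constant $c$ with $\|F(\Delta_i)\|\le c\,\nu(\Delta_i)$ for every $i$. Since $\nu$ is a finite measure, continuity from above for measures applies: $\nu(\Delta_1)<\infty$ together with $\Delta_i\downarrow\emptyset$ gives $\nu(\Delta_i)\to\nu(\emptyset)=0$. (This is where finiteness of $\nu$ is essential; for a general measure continuity from above can fail.) Hence $\|F(\Delta_i)\|\le c\,\nu(\Delta_i)\to 0$, so $\lim_{i\to\infty}\|F(\Delta_i)\|=0$.

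Since this holds for every such sequence, Proposition~\ref{down} yields that $F$ is uniformly continuous, completing the proof. I would also remark that the regularity of $\nu$ is not actually needed for this implication — only finiteness and countable additivity enter — but since the definition of absolute continuity as stated presupposes a regular measure, I would simply phrase it with the hypotheses as given.

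I do not expect any serious obstacle here; the only point requiring a word of care is invoking continuity from above for $\nu$, which is why I would explicitly note that $\nu(\Delta_1)\le\nu(X)<\infty$ before passing to the limit. Everything else is a one-line chain of inequalities feeding into the already-established Proposition~\ref{down}.
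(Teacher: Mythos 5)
Your argument is correct and is the natural one: the paper itself defers the proof of this theorem to \cite{B2}, but given that Proposition~\ref{down} is stated immediately beforehand, reducing to the criterion $\Delta_i\downarrow\emptyset \Rightarrow \|F(\Delta_i)\|\to 0$ and invoking continuity from above of the finite measure $\nu$ is exactly the intended route. Your explicit note that $\nu(\Delta_1)\le\nu(X)<\infty$ is what licenses continuity from above is the one point of care needed, and it is consistent with the role finiteness plays in the contrast with Theorem~\ref{abc}.
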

If $F$ is absolutely continuous with respect to an infinite measure, we have the following generalization of theorem \ref{abs}.
\begin{theorem}\cite{B2}\label{abc}
Suppose $F$ is absolutely continuous with respect to a regular measure $\nu$. Suppose, $\Delta$ is such that $\nu(\Delta)<\infty$. Then, $F$ is uniformly continuous at $\Delta$.
\end{theorem}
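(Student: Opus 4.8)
The plan is to reduce the statement to the finite-measure case already settled in Theorem \ref{abs}, or — more directly — to verify the criterion of Proposition \ref{up} by hand; I would adopt the second route since it is self-contained given the propositions already proved. By Proposition \ref{up}, $F$ is uniformly continuous at $\Delta$ precisely when it is uniformly continuous from below at $\Delta$, so it suffices to show that $\|F(\Delta)-F(\Delta_i)\|\to 0$ for every increasing sequence $\Delta_i\uparrow\Delta$ with each $\Delta_i\in\mathcal{B}(X)$.

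First I would invoke the (finite) additivity of $F$ contained in Definition \ref{POV}: since $\Delta_i\subset\Delta$, the sets $\Delta_i$ and $\Delta\setminus\Delta_i$ are disjoint with union $\Delta$, hence $F(\Delta)=F(\Delta_i)+F(\Delta\setminus\Delta_i)$ and therefore $\|F(\Delta)-F(\Delta_i)\|=\|F(\Delta\setminus\Delta_i)\|$. Next, the sets $\Delta\setminus\Delta_i$ form a decreasing sequence with $\bigcap_i(\Delta\setminus\Delta_i)=\Delta\setminus\bigcup_i\Delta_i=\emptyset$, i.e. $\Delta\setminus\Delta_i\downarrow\emptyset$. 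Because $\nu$ is a measure and $\nu(\Delta)<\infty$, continuity from above applies to the sequence $\Delta\setminus\Delta_i$, giving $\nu(\Delta\setminus\Delta_i)\to 0$. Finally, absolute continuity of $F$ with respect to $\nu$ yields $\|F(\Delta\setminus\Delta_i)\|\le c\,\nu(\Delta\setminus\Delta_i)$, and the right-hand side tends to $0$; combining with the identity above gives $\|F(\Delta)-F(\Delta_i)\|\to 0$, which is exactly uniform continuity from below at $\Delta$. An appeal to Proposition \ref{up} then finishes the argument.

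There is no serious obstacle here; the only step that genuinely uses the hypothesis is the invocation of continuity from above of $\nu$, which fails for infinite measures and is precisely why the restriction $\nu(\Delta)<\infty$ cannot be dropped in this localized form of the theorem. An equivalent and equally short route is to observe that $\nu_\Delta(\,\cdot\,):=\nu(\,\cdot\,\cap\Delta)$ is a regular finite measure, that the restriction of $F$ to $\mathcal{B}(\Delta)$ is a POVM absolutely continuous with respect to $\nu_\Delta$, and then to quote Theorem \ref{abs} directly; I would present whichever of the two is cleaner in context.
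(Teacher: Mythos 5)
The paper states this theorem without proof, citing \cite{B2}, so there is no in-text argument to compare against; your proof is correct and is the natural one. The reduction via Proposition \ref{up} to $\|F(\Delta\setminus\Delta_i)\|\le c\,\nu(\Delta\setminus\Delta_i)\to 0$, using continuity from above of $\nu$ on the finite-measure set $\Delta$, is exactly where the hypothesis $\nu(\Delta)<\infty$ is needed, and you identify that correctly.
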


\subsection{Uniform continuity and norm-1 property}
\noindent
In the present subsection we give a necessary condition for the norm-1 property of uniformly continuous POVMs. First we recall the definition and the physical meaning of the norm-1 property.
\begin{definition}
A POVM $F:\mathcal{B}(X)\to\mathcal{F(H)}$ has the norm-1-property if $\|F(\Delta)\|=1$, for each $\Delta\in\mathcal{B}(X)$ such that $F(\Delta)\neq\mathbf{0}$. 
\end{definition}
\noindent
\noindent
The following proposition explains the physical meaning of the norm-1 property. 
\begin{proposition}\label{lim}
A POVM $F$ has the norm-1 property if and only if, for each $\Delta\in\mathcal{B}(X)$ such that $F(\Delta)\neq\mathbf{0}$, there is a sequence of unit vectors $\psi_n$ such that $\lim_{n\to\infty}\langle\psi_n, F(\Delta)\psi_n\rangle=1$.
\end{proposition}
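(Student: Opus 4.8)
The plan is to reduce everything to the standard fact that for a bounded positive self-adjoint operator $A$ one has $\|A\|=\sup_{\|\psi\|=1}\langle\psi,A\psi\rangle$, together with the observation that for a normalized POVM $F$ the operator $F(\Delta)$ satisfies $\mathbf{0}\le F(\Delta)\le F(X)=\mathbf{1}$, so that $0\le\langle\psi,F(\Delta)\psi\rangle\le 1$ for every unit vector $\psi$ and hence $\|F(\Delta)\|\le 1$ always. With these two ingredients in hand the equivalence becomes essentially a restatement of the definition of supremum.

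First I would prove the ``only if'' direction. Assume $F$ has the norm-1 property and fix $\Delta$ with $F(\Delta)\ne\mathbf{0}$, so $\|F(\Delta)\|=1$. Since $F(\Delta)\in\mathcal{F}(\mathcal{H})$ is positive and self-adjoint, $\|F(\Delta)\|=\sup_{\|\psi\|=1}\langle\psi,F(\Delta)\psi\rangle$; by definition of the supremum there is a sequence of unit vectors $\psi_n$ with $\langle\psi_n,F(\Delta)\psi_n\rangle\to\|F(\Delta)\|=1$, which is exactly what is claimed.

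Next I would prove the ``if'' direction. Suppose that for each $\Delta$ with $F(\Delta)\ne\mathbf{0}$ there is a sequence of unit vectors $\psi_n$ with $\langle\psi_n,F(\Delta)\psi_n\rangle\to 1$. For every $n$ we have $\langle\psi_n,F(\Delta)\psi_n\rangle\le\|F(\Delta)\|$, so passing to the limit gives $\|F(\Delta)\|\ge 1$. On the other hand $F(\Delta)\le\mathbf{1}$ forces $\|F(\Delta)\|\le 1$. Hence $\|F(\Delta)\|=1$, and since $\Delta$ was arbitrary among the sets with $F(\Delta)\ne\mathbf{0}$, $F$ has the norm-1 property.

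The only point that requires a word of justification rather than a one-line computation is the identity $\|A\|=\sup_{\|\psi\|=1}\langle\psi,A\psi\rangle$ for positive self-adjoint $A$ and the inequality $F(\Delta)\le\mathbf{1}$; both are standard (the former is the numerical-radius formula for self-adjoint operators, the latter follows from $\mathbf{1}-F(\Delta)=F(X\setminus\Delta)\ge\mathbf{0}$ since $F$ takes values in $\mathcal{F}(\mathcal{H})$). I do not expect a genuine obstacle here; the content of the proposition is interpretive — it exhibits the norm-1 property as the statement that the value $1$ is approached arbitrarily well by the measurement statistics $\langle\psi,F(\Delta)\psi\rangle$ — rather than technical.
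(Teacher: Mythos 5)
Your proof is correct. The paper states this proposition without giving a proof (treating it as immediate), and your argument — the variational identity $\|A\|=\sup_{\|\psi\|=1}\langle\psi,A\psi\rangle$ for positive self-adjoint $A$ in one direction, and the bounds $1\le\|F(\Delta)\|\le 1$ coming from the approximating sequence and from $F(\Delta)\le F(X)=\mathbf{1}$ in the other — is exactly the standard reasoning the paper implicitly relies on.
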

If an observable is described by a PVM $E$ (sharp observable) then, for any Borel set $\Delta$ such that $E(\Delta)\neq\mathbf{0}$, there exists a unit vector $\psi$ for which $\langle\psi, E(\Delta)\psi\rangle=1$; i.e., the probability that a measure of the observable $E$ in the state $\psi$ gives a result in $\Delta$  is one. That is not true if an observable is described by a POVM $F$ (unsharp observable) since there are Borel sets $\Delta$ such that  $0<\langle\psi, F(\Delta)\psi\rangle<1$ for any vector $\psi$. We have here a relevant difference between sharp and unsharp observables. For example, suppose that  $E$ and $F$ are a sharp and an unsharp localization observable respectively ($E$ could refer to a non-relativistic particle and $F$ to the photon). Then, in the sharp case, for any set $\Delta$ there is a state $\psi$ such that the system is surely localizable in $\Delta$ (sharp localization) while in the unsharp case such a state does not exist in general (unsharp localization). That raises the problem of looking for conditions to be satisfied by the unsharp observables in order to ensure a kind of unsharp localization which is as close as possible to the sharp one. 
The norm-1 property is a possible answer to such a problem. Indeed, as a consequence of proposition \ref{lim}, if $F$ has the norm-1 property then, for any $\epsilon>0$, there is a pure state ${\psi}$ such that $\langle\psi, F(\Delta)\psi\rangle>1-\epsilon $.  In other words, the norm-1 property implies that, for any $\Delta$, there exists a preparation procedure such that the quantum mechanical system can be localized within $\Delta$ as accurately as desired although not sharply. 

The photon is an examples of a not sharply localizable system \cite{Newton,Wightman}. Recently \cite{B2} it was proved that the localization in phase space of massless relativistic particles does not satisfy the norm-1 property.  

In the case of uniformly continuous POVMs we have the following necessary condition for the norm-1 property.
\begin{theorem}\cite{B2}\label{norm1}
Let $F:X\to\mathcal{F(H)}$ be uniformly continuous and let $\sigma(F)$ be the spectrum of  $F$. Then, $F$ has the norm-1-property only if $\|F(\{x\})\|\neq 0$ for each $x\in\sigma(F)$.
\end{theorem}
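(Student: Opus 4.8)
The plan is to argue by contradiction, exploiting Proposition~\ref{down}. Suppose $F$ is uniformly continuous and has the norm-1 property, yet there is a point $x\in\sigma(F)$ with $F(\{x\})=\mathbf{0}$ (equivalently $\|F(\{x\})\|=0$, since $F(\{x\})$ is a positive operator). The guiding idea is that near such an $x$ all the ``mass'' of $F$ must be carried by punctured neighbourhoods of $x$ that shrink to the empty set, which uniform continuity forbids once the norm-1 property pins the norm of every neighbourhood of $x$ to $1$.

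First I would produce a decreasing sequence of open sets $G_1\supset G_2\supset\cdots$ with $x\in G_i$ for all $i$ and $\bigcap_{i=1}^{\infty}G_i=\{x\}$. This is where the topological hypotheses on $X$ used throughout the paper enter: since $X$ is (as in the regularity proposition) Hausdorff, locally compact and second countable, hence metrizable by Urysohn's theorem, one may simply take $G_i$ to be the open ball of radius $1/i$ centred at $x$; more generally first countability together with the $T_1$ axiom suffices, choosing $G_i$ as the intersection of the first $i$ members of a countable neighbourhood basis at $x$. Because $X$ is $T_1$ the singleton $\{x\}$ is closed, so each $\Delta_i:=G_i\setminus\{x\}$ is a Borel set, the sequence $\{\Delta_i\}$ is decreasing, and $\bigcap_{i}\Delta_i=\bigl(\bigcap_i G_i\bigr)\setminus\{x\}=\emptyset$; that is, $\Delta_i\downarrow\emptyset$.

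Next I would combine finite additivity with the two properties of $F$. By additivity, $F(G_i)=F(\{x\})+F(\Delta_i)=F(\Delta_i)$ for every $i$, because $F(\{x\})=\mathbf{0}$. On the one hand, Proposition~\ref{down} applied to $\Delta_i\downarrow\emptyset$ gives $\lim_{i\to\infty}\|F(\Delta_i)\|=0$, hence $\lim_{i\to\infty}\|F(G_i)\|=0$. On the other hand, $G_i$ is an open set containing $x\in\sigma(F)$, so by the very definition of the spectrum $F(G_i)\neq\mathbf{0}$, and the norm-1 property then forces $\|F(G_i)\|=1$ for every $i$. These two conclusions are incompatible, and the contradiction establishes that $F(\{x\})\neq\mathbf{0}$, i.e. $\|F(\{x\})\|\neq 0$, for each $x\in\sigma(F)$.

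The only genuinely delicate point is the first step, namely securing a countable decreasing family of open neighbourhoods of $x$ whose intersection is exactly $\{x\}$; everything after that is a two-line consequence of Proposition~\ref{down}, finite additivity and the definition of $\sigma(F)$. Since the abstract POVMs considered here live on second countable locally compact Hausdorff (hence metrizable) spaces, and since in all the physical examples treated later $X$ is a concrete nice metric space (a subset of $\mathbb{R}$, of the circle, of a phase space, or a discrete set), this reduction is automatic, so I do not expect the topological step to be a real obstacle.
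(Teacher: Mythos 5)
Your argument is correct, and since the paper states Theorem~\ref{norm1} without proof (importing it from \cite{B2}), your reconstruction via Proposition~\ref{down} --- puncturing a shrinking countable neighbourhood base of $x$ to get $\Delta_i=G_i\setminus\{x\}\downarrow\emptyset$, then playing $\|F(\Delta_i)\|\to 0$ against $\|F(G_i)\|=1$ forced by $x\in\sigma(F)$ and the norm-1 property --- is precisely the natural argument the definitions here support. Your caveat about needing $T_1$ plus a countable neighbourhood base with $\bigcap_i G_i=\{x\}$ is the right one to flag, and it is automatic in the second countable, locally compact Hausdorff setting in which the paper works.
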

\begin{theorem}\cite{B2}\label{norm2}
Let $F:\mathcal{B}(X)\to\mathcal{F(H)}$ be absolutely continuous with respect to a regular measure $\nu$. Then, $F$ has the norm-1 property only if $\|F(\{x\})\|\neq 0$ for each $x\in X$ such that $\nu(\{x\})<\infty$. 
\end{theorem}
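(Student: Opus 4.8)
The plan is to argue by contradiction, following the proof of Theorem~\ref{norm1} but localizing: where that argument uses the global uniform continuity of $F$, here I will use uniform continuity of $F$ \emph{at} a suitably chosen neighbourhood of finite $\nu$-measure, which is exactly what Theorem~\ref{abc} provides. So suppose $F$ has the norm-1 property and, for contradiction, that $F(\{x\})=\mathbf{0}$ for some $x\in\sigma(F)$ with $\nu(\{x\})<\infty$ (as in Theorem~\ref{norm1}, the relevant range of $x$ is the spectrum of $F$).

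The first step is to produce an open neighbourhood of $x$ of finite $\nu$-measure. Since $\nu$ is regular it is in particular outer regular, so $\nu(\{x\})=\inf\{\nu(U):\,U\ \text{open},\ x\in U\}$; as $\nu(\{x\})<\infty$, there is an open set $U\ni x$ with $\nu(U)<\infty$. By Theorem~\ref{abc}, $F$ is then uniformly continuous at $U$, hence, by Proposition~\ref{up}, uniformly continuous from below at $U$. Combining this with finite additivity of $F$ --- for $B\subset U$ one has $F(U)=F(B)+F(U\setminus B)$, so a decreasing sequence $B_i\downarrow\emptyset$ with $B_i\subset U$ yields $U\setminus B_i\uparrow U$ and therefore $\|F(B_i)\|=\|F(U)-F(U\setminus B_i)\|\to 0$ --- I obtain the local counterpart of Proposition~\ref{down}: $\|F(B_i)\|\to 0$ for every decreasing sequence $B_i\downarrow\emptyset$ contained in $U$.

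The second step is to pick a decreasing sequence of open neighbourhoods $G_i\downarrow\{x\}$ with $G_i\subset U$; this is possible because the spaces relevant to the paper are metrizable (being locally compact Hausdorff and second countable), hence first countable --- concretely one may take $G_i=\{y:d(y,x)<1/i\}\cap U$ for a compatible metric $d$. Setting $B_i:=G_i\setminus\{x\}$ we have $B_i\downarrow\emptyset$ and $B_i\subset U$, so the first step gives $\|F(G_i\setminus\{x\})\|\to 0$; since $F(G_i)=F(G_i\setminus\{x\})+F(\{x\})$ and $F(\{x\})=\mathbf{0}$, this yields $\|F(G_i)\|\to 0$. On the other hand each $G_i$ is an open set containing $x\in\sigma(F)$, so $F(G_i)\neq\mathbf{0}$, and the norm-1 property forces $\|F(G_i)\|=1$ for all $i$ --- contradicting $\|F(G_i)\|\to 0$. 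Hence $F(\{x\})\neq\mathbf{0}$, i.e.\ $\|F(\{x\})\|\neq 0$.

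The genuinely delicate point --- and the one I expect to be the main obstacle --- is the localization in the first step: one must extract from the regularity of $\nu$ an open neighbourhood of $x$ carrying finite $\nu$-measure, so that Theorem~\ref{abc} can be brought to bear. This is precisely where the hypothesis $\nu(\{x\})<\infty$ is used (if every neighbourhood of $x$ had infinite $\nu$-measure, outer regularity would already force $\nu(\{x\})=\infty$). The remaining steps are the routine bookkeeping of finite additivity together with the equivalences of Propositions~\ref{up} and~\ref{down}.
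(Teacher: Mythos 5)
Your argument is correct. The paper itself states Theorem~\ref{norm2} without proof (it is imported from \cite{B2}), so there is no in-paper proof to compare against; but your proof is exactly the natural localization of the argument for Theorem~\ref{norm1}: outer regularity of $\nu$ together with $\nu(\{x\})<\infty$ yields an open $U\ni x$ with $\nu(U)<\infty$, Theorem~\ref{abc} gives uniform continuity of $F$ at $U$, and the local version of Proposition~\ref{down} that you derive from Proposition~\ref{up} and finite additivity then kills $\|F(G_i)\|$ along a shrinking base $G_i\downarrow\{x\}$, contradicting the norm-1 property since each $F(G_i)\neq\mathbf{0}$ for $x\in\sigma(F)$. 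The two caveats you flag yourself are the right ones and are harmless in the paper's setting: (i) the conclusion can only be meant for $x\in\sigma(F)$ (for $x\notin\sigma(F)$ one trivially has $F(\{x\})=\mathbf{0}$, so the literal ``for each $x\in X$'' must be read as in Theorem~\ref{norm1}); and (ii) the existence of a countable decreasing base of open neighbourhoods of $x$ inside $U$ uses first countability (e.g.\ metrizability), which holds for all spaces the paper works with (locally compact, Hausdorff, second countable). With those readings the proof is complete.
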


\section{The commutative case}
In the present section we restrict ourselves to the case of commutative POVMs. In particular, we show that a commutative POVM is uniformly continuous if it admits a strong Feller Markov kernel. We start by recalling some results which  show the relationships between commutative POVMs and PVMs. In the following each POVM is assumed to be real, i.e., $F:\mathcal{B}(\mathbb{R})\to\mathcal{F}(H)$. Let $\Lambda$ be a subset of $\mathbb{R}$ and $\mathcal{B}(\Lambda)$ the corresponding Borel $\sigma$-algebra.
\begin{definition}
A real Markov kernel is a map $\mu:\Lambda\times\mathcal{B}(\mathbb{R})\to[0,1]$ such that,
\begin{itemize}
\item[1.] $\mu_{\Delta}(\cdot)$ is a measurable function for each $\Delta\in\mathcal{B}(\mathbb{R})$,
\item[2.] $\mu_{(\cdot)}(\lambda)$ is a probability measure for each $\lambda\in \Lambda$.
\end{itemize}
\end{definition}
\begin{definition}
Let $\nu$ be a measure on $\Lambda$. A map $\mu:\Lambda\times\mathcal{B}(\mathbb{R})\to[0,1]$ is a weak Markov kernel with respect to $\nu$ if:
\begin{itemize}
\item[1.] $\mu_{\Delta}(\cdot)$ is a measurable function for each $\Delta\in\mathcal{B}(\mathbb{R})$,
\item[2.] $0\leq\mu_{\mathbb{R}}(\lambda)\leq 1$,\quad $\nu-a.e.$,
\item[3.]$\mu_{\mathbb{R}}(\lambda)=1$, $\mu_{\emptyset}(\lambda)=0$,\quad $\nu-a.e.$,
\item[4.] for any sequence $\{\Delta_i\}_{i\in\mathbb{N}}$, $\Delta_i\cap\Delta_j=\emptyset$,  
$$\sum_i\mu_{(\Delta_i)}(\lambda)=\mu_{(\cup_i\Delta_i)}(\lambda),\quad \nu-a.e.$$
\end{itemize}
\end{definition}
\begin{definition}
The map $\mu:\Lambda\times\mathcal{B}(\mathbb{R})\to[0,1]$ is a weak Markov kernel with respect to a PVM $E:\mathcal{B}(\Lambda)\to\mathcal{E(H)}$ if it is a weak Markov kernel with respect to each measure $\nu_x(\cdot):=\langle E(\cdot)\,x,x\rangle$, $x\in\mathcal{H}$. 
\end{definition}
\noindent
In the following, by a weak Markov kernel $\mu$ we mean a weak Markov kernel with respect to a PVM $E$. Moreover the function $\lambda\mapsto\mu_{\Delta}(\lambda)$ will be denoted indifferently by $\mu_{\Delta}$ or $\mu_{\Delta}(\cdot)$.    
\begin{definition}
A POV measure $F:\mathcal{B}(\mathbb{R})\to\mathcal{F(H)}$ is said to be a smearing of a POV measure $E:\mathcal{B}(\Lambda)\to\mathcal{E(H)}$ if there exists a weak Markov kernel $\mu:\Lambda\times\mathcal{B}(\mathbb{R})\to[0,1]$ such that,
\begin{equation*}
F(\Delta)=\int_{\Lambda} \mu_{\Delta}(\lambda)d E_{\lambda}, \,\,\,\,\,\,\,\Delta\in\mathcal{B}(\mathbb{R}).
\end{equation*}
\end{definition}

\begin{example}
Let $Q$ be the position operator; i.e.,
\begin{align*}
Q:L^2(\mathbb{R})&\to L^2(\mathbb{R})\\
\psi(x)\in L^2(\mathbb{R})&\mapsto Q\psi:=x\psi(x)
\end{align*}
\noindent
A possible smearing of $Q$ is the optimal position POVM
\begin{align}\label{Q}
F^Q(\Delta)&=\frac{1}{l\,\sqrt{2\,\pi}}\int_{-\infty}^{\infty}\Big(\int_{\Delta}e^{-\frac{(x-y)^2}{2\,l^2}}\,d y\Big)\,dE^Q_x=\int_{-\infty}^{\infty}\mu_{\Delta}(x)\,dE^Q_x
\end{align}
\noindent
where, 
\begin{equation*}
\mu_{\Delta}(x)=\frac{1}{l\,\sqrt{2\,\pi}}\int_{\Delta}e^{-\frac{(x-y)^2}{2\,l^2}}\,d y 
\end{equation*}
\noindent
defines a Markov kernel and $E^Q$ is the spectral measure corresponding to the position operator $Q$.
\end{example}
\noindent
In the following, the symbol $\mu$ is used to denote both  Markov kernels and weak Markov kernels. The symbols $A$ and $B$ are used to denote self-adjoint operators. 
\begin{definition}
Whenever $F$, $A$, and $\mu$ are such that $F(\Delta)=\mu_{\Delta}(A)$, $\Delta\in\mathcal{B}(\mathbb{R})$, we say that $(F,A,\mu)$ is a von Neumann triplet. 
\end{definition}
 \begin{definition}
A Feller Markov kernel is a  Markov kernel $\mu_{(\cdot)}(\cdot):\Lambda\times\mathcal{B}(\mathbb{R})\to[0,1]$ such that the function 
$$G(\lambda)=\int_{\Lambda}f(t)\,d\mu_t(\lambda),\quad\lambda\in\Lambda$$
\noindent
is continuous and bounded whenever $f$ is continuous and bounded. 
\end{definition}
\noindent
The following theorem establishes a relationship between commutative POVMs and spectral measures and then provides a characterization of the former. Other characterizations of commutative POVMs and an analysis of the relationships between them can be found in Ref.s \cite{Ali,Holevo,Ali5,P1}.
\begin{theorem}\cite{B1}\label{Cha}
A real POVM $F:\mathcal{B}(\mathbb{R})\to\mathcal{F(H)}$ is commutative if and only if, there exists a bounded self-adjoint operator $A$ with spectrum $\sigma(A)\subset[0,1]$ and a Feller Markov Kernel 
$\mu:\Gamma\subset\sigma(A)\times\mathcal{B}(\mathbb{R})\to[0,1]$, $E(\Gamma)=\mathbf{1}$,
such that 
\begin{equation}\label{Smearing}
F(\Delta)=\int_{\Gamma}\mu_{\Delta}(\lambda)\,dE_{\lambda},\quad\Delta\in\mathcal{B}(\mathbb{R})
\end{equation}
\noindent
Moreover, $\mu$ separates the points in $\Gamma$ and $\mu_{\Delta}$ is continuous for each $\Delta$ in a ring $\mathcal{R}$ which generates the Borel $\sigma$-algebra of the reals.
\end{theorem}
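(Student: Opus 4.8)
The plan is to prove the two implications separately; the work is all in necessity, so let me dispatch sufficiency first. If $F(\Delta)=\int_{\Gamma}\mu_{\Delta}(\lambda)\,dE_{\lambda}$ for a Markov kernel $\mu$ (Feller or not), then since $E(\Gamma)=\mathbf{1}$ this reads $F(\Delta)=\mu_{\Delta}(A)$, so every $F(\Delta)$ lies in the abelian von Neumann algebra $\{A\}''$ and hence $F(\Delta_{1})F(\Delta_{2})=\mu_{\Delta_{1}}(A)\mu_{\Delta_{2}}(A)=\mu_{\Delta_{2}}(A)\mu_{\Delta_{1}}(A)=F(\Delta_{2})F(\Delta_{1})$ by the functional calculus of a single self-adjoint operator; Feller-ness plays no role here.

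For necessity I would first fix, once and for all, a countable ring $\mathcal{R}$ generating $\mathcal{B}(\mathbb{R})$ and containing $\mathbb{R}$, say the ring generated by the rational half-lines $(-\infty,q]$, $q\in\mathbb{Q}$. Commutativity of $F$ makes the von Neumann algebra $\mathcal{M}$ generated by $\{F(\Delta):\Delta\in\mathcal{R}\}$ abelian; being countably generated it is singly generated (enumerate the spectral projections of all rational intervals of all the $F(\Delta)$, $\Delta\in\mathcal{R}$, as $\{P_{j}\}$ and take $A=\sum_{j}3^{-j}P_{j}$, whose spectral projections recover every $P_{j}$), so there is a bounded self-adjoint $A$ with $\mathcal{M}=\{A\}''$, which after an affine rescaling I may take to have $\sigma(A)\subset[0,1]$. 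Put $E:=E^{A}$, so $E(\sigma(A))=\mathbf{1}$. By the Borel functional calculus, for each $\Delta\in\mathcal{R}$ there is a bounded Borel function $g_{\Delta}$ on $\sigma(A)$, unique up to an $E$-null set, with $F(\Delta)=g_{\Delta}(A)$.

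Next I would translate the POVM axioms for $F$ into pointwise properties of the $g_{\Delta}$ by integrating against the finite measures $\nu_{x}(\cdot):=\langle E(\cdot)x,x\rangle$ and applying monotone convergence: $0\le g_{\Delta}\le 1$, $g_{\mathbb{R}}=1$, $g_{\emptyset}=0$, finite additivity over $\mathcal{R}$, and $g_{(r,r+1/n]}\downarrow 0$ for each $r\in\mathbb{Q}$ all hold $E$-a.e.; moreover, since $\mathcal{M}=\{A\}''$ is generated by the $g_{\Delta}(A)$, the countable family $\{g_{\Delta}\}_{\Delta\in\mathcal{R}}$ generates $\mathcal{B}(\sigma(A))$ modulo $E$-null sets and hence separates points off an $E$-null set. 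Being countably many conditions, all of these hold simultaneously off a single $E$-null set $N$; set $\Gamma:=\sigma(A)\setminus N$, so $E(\Gamma)=\mathbf{1}$ and $\mu$ will separate the points of $\Gamma$. For each fixed $\lambda\in\Gamma$ the set function $\Delta\mapsto g_{\Delta}(\lambda)$ is now a normalised, $[0,1]$-valued, finitely additive set function on the ring $\mathcal{R}$; the classical Heine--Borel compactness argument for intervals — cover a closed bounded rational interval by the open enlargements of the pieces of a countable partition, pass to a finite subcover, and use the shrinking-interval relations $g_{(r,r+1/n]}\downarrow 0$ — upgrades it to a premeasure, which by Carath\'eodory's theorem extends uniquely to a probability measure $\mu_{(\cdot)}(\lambda)$ on $\mathcal{B}(\mathbb{R})=\sigma(\mathcal{R})$. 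Setting $\mu_{\Delta}(\lambda):=\mu_{(\cdot)}(\lambda)(\Delta)$ one has $\mu_{\Delta}=g_{\Delta}$ on $\Gamma$ for $\Delta\in\mathcal{R}$. The rest is routine: $\lambda\mapsto\mu_{\Delta}(\lambda)$ is Borel for $\Delta\in\mathcal{R}$, and the class of $\Delta$ for which it is Borel is closed under monotone limits by continuity of the measures $\mu_{(\cdot)}(\lambda)$, so $\mu$ is a Markov kernel; the identity $F(\Delta)=\int_{\Gamma}\mu_{\Delta}\,dE_{\lambda}$ holds on $\mathcal{R}$ by construction, hence for every $\Delta$ since for each $x$ the finite Borel measures $\langle F(\cdot)x,x\rangle$ and $\int_{\Gamma}\mu_{(\cdot)}\,d\nu_{x}$ agree on $\mathcal{R}$. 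For the Feller property one uses that $\mu_{(-\infty,q]}=g_{(-\infty,q]}$ is continuous on $\Gamma$ for every rational $q$: if $\lambda_{n}\to\lambda_{0}$ in $\Gamma$, the distribution functions of $\mu_{(\cdot)}(\lambda_{n})$ converge to that of $\mu_{(\cdot)}(\lambda_{0})$ at every rational point, hence, by sandwiching between neighbouring rationals, at every continuity point of the limit, which for probability measures is weak convergence $\mu_{(\cdot)}(\lambda_{n})\Rightarrow\mu_{(\cdot)}(\lambda_{0})$; therefore $G(\lambda)=\int f\,d\mu_{t}(\lambda)$ converges and $|G(\lambda)|\le\sup|f|$ for every bounded continuous $f$, i.e.\ $G$ is continuous and bounded.

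The main obstacle is the passage, in the previous paragraph, from data that the functional calculus pins down only $E$-a.e.\ to a genuinely countably additive probability measure $\mu_{(\cdot)}(\lambda)$ for $E$-almost every $\lambda$ \emph{simultaneously}: there are uncountably many $\sigma$-additivity relations on $\mathcal{R}$, so one cannot simply delete a countable union of null sets, and the Heine--Borel argument on the ring of intervals is precisely what reduces this to the countably many conditions $g_{(r,r+1/n]}\downarrow 0$. The second, closely related sticking point — and the one genuinely sensitive to how $A$ is produced — is arranging that the representatives $g_{\Delta}$ ($\Delta\in\mathcal{R}$) can be taken continuous on $\Gamma$: the von Neumann (not $C^{*}$) functional calculus only delivers them up to null sets, yet the Feller conclusion above rests exactly on this continuity, so this is where the argument must work hardest.
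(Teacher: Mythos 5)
Your sufficiency argument and the skeleton of the necessity argument (a countable generating ring, a single self-adjoint generator $A$ of the abelian von Neumann algebra, Borel functional calculus to obtain densities $g_{\Delta}$, and the Carath\'eodory--Heine--Borel compactness argument that upgrades the $E$-a.e.\ additivity relations on the countable ring to genuine probability measures $\mu_{(\cdot)}(\lambda)$ for all $\lambda$ outside a single null set) are sound, and you correctly identify the reduction of uncountably many $\sigma$-additivity conditions to countably many as the crux of the disintegration step. But the proof as written does not prove the theorem: the continuity of $\mu_{\Delta}$ on $\Gamma$ for $\Delta\in\mathcal{R}$ --- which is both the ``moreover'' clause and the sole input to your Feller argument via convergence of distribution functions at rational points --- is never established, and you say so yourself in your closing paragraph. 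The Borel functional calculus hands you $g_{\Delta}$ only as an $E$-equivalence class; nothing in the argument selects a representative that is continuous on a set of full $E$-measure, and for a generic bounded Borel function of $A$ no such representative exists. Since the existence of a (weak) Markov kernel for a commutative POVM is classical, the Feller/continuity assertion is exactly the content that distinguishes this theorem from its predecessors, so as it stands the proposal proves a strictly weaker statement.

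The gap is closable, and the key is already latent in your own construction of $A$. You set $A=\sum_{j}3^{-j}P_{j}$, where $\{P_{j}\}$ enumerates the rational spectral projections of the operators $F(\Delta)$, $\Delta\in\mathcal{R}$. Because $\sum_{k>j}3^{-k}<3^{-j}$, the map $(\epsilon_{j})\mapsto\sum_{j}3^{-j}\epsilon_{j}$ is a homeomorphism of $\{0,1\}^{\mathbb{N}}$ onto a compact set containing $\sigma(A)$, so each digit function $d_{j}$ is \emph{continuous} on $\sigma(A)$ and $P_{j}=d_{j}(A)$; each $F(\Delta)$ with $\Delta\in\mathcal{R}$ is a norm limit of rational linear combinations of its own spectral projections, so $g_{\Delta}$ may be taken to be a uniform limit of continuous functions on $\sigma(A)$, hence continuous there. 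With that in hand, the continuity on $\Gamma$, the separation of points, and your weak-convergence argument for the Feller property all go through. (Note also that the paper itself gives no proof of this theorem, which is quoted from Ref.~\cite{B1}, and that your premeasure step should additionally record the tightness conditions $g_{(-\infty,-n]}\downarrow 0$ and $g_{(n,\infty)}\downarrow 0$, which hold $E$-a.e.\ for the same reason as the right-continuity conditions.)
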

\begin{corollary}\label{smearing}
 A POVM $F$ is commutative if and only if it is a smearing of a PVM $E$ with bounded spectrum.
\end{corollary}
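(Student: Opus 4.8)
The plan is to derive both implications directly from Theorem \ref{Cha}, since the corollary is essentially a restatement of that theorem in which one discards the Feller and point-separation refinements of the kernel and keeps only the bare smearing relation. I would organize the argument as two short implications, the first immediate from Theorem \ref{Cha} and the second a one-line functional-calculus observation.

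For necessity, suppose $F$ is commutative. Theorem \ref{Cha} furnishes a bounded self-adjoint operator $A$ with $\sigma(A)\subset[0,1]$, its spectral measure $E=E^A$ with $E(\Gamma)=\mathbf{1}$ for some $\Gamma\subset\sigma(A)$, and a Feller Markov kernel $\mu:\Gamma\times\mathcal{B}(\mathbb{R})\to[0,1]$ with $F(\Delta)=\int_\Gamma\mu_\Delta(\lambda)\,dE_\lambda$. The only thing to check is that this exhibits $F$ as a \emph{smearing} of $E$ in the sense of the definition, i.e. that a Feller Markov kernel is, in particular, a weak Markov kernel with respect to $E$; this is immediate, because the defining conditions of a Markov kernel hold pointwise (everywhere), hence a fortiori $\nu_x$-almost everywhere for every $x\in\mathcal{H}$, which is exactly what a weak Markov kernel with respect to $E$ requires. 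Since $E$ has spectrum $\sigma(E)=\sigma(A)\subset[0,1]$, it is a PVM with bounded spectrum, and we are done with this direction.

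For sufficiency, suppose $F(\Delta)=\int_\Lambda\mu_\Delta(\lambda)\,dE_\lambda$ for some PVM $E$ with bounded spectrum $\Lambda$ and some weak Markov kernel $\mu$ with respect to $E$. I would set $B=\int_\Lambda\lambda\,dE_\lambda$, which is a bounded self-adjoint operator (bounded precisely because $\Lambda$ is), and note that each $\mu_\Delta(\cdot)$ is measurable (condition~1 of a weak Markov kernel) and bounded, its values lying in $[0,1]$; hence the functional calculus applies and $F(\Delta)=\mu_\Delta(B)$, that is, $(F,B,\mu)$ is a von Neumann triplet. Commutativity of $F$ then follows from the elementary fact that any two bounded measurable functions of a single self-adjoint operator commute: $[F(\Delta_1),F(\Delta_2)]=[\mu_{\Delta_1}(B),\mu_{\Delta_2}(B)]=\mathbf{0}$ for all $\Delta_1,\Delta_2\in\mathcal{B}(\mathbb{R})$.

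I do not anticipate a genuine obstacle. The two places that call for a line of verification are (i) that a Markov kernel automatically qualifies as a weak Markov kernel with respect to the relevant PVM — a routine comparison of the ``everywhere'' and ``$\nu_x$-a.e.'' forms of the axioms, together with the harmless identification of $E$ with a PVM on $\mathcal{B}(\Gamma)$ when $E(\Gamma)=\mathbf{1}$ — and (ii) that the operator integral $\int\mu_\Delta\,dE_\lambda$ may be read off as the functional-calculus element $\mu_\Delta(B)$, which is legitimate exactly because $\mu_\Delta$ is bounded and measurable. I would also remark in passing that the boundedness of the spectrum is not actually needed in the sufficiency direction; it is recorded in the statement because it is what the necessity direction, via Theorem \ref{Cha}, actually delivers, and keeping it makes the equivalence tight.
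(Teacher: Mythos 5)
Your proof is correct and follows the same route the paper intends: the corollary is stated without proof as an immediate consequence of Theorem \ref{Cha}, and your two implications (reading off the Feller Markov kernel of Theorem \ref{Cha} as a weak Markov kernel for the necessity, and invoking the functional calculus for a single bounded self-adjoint operator for the sufficiency) are exactly the natural way to fill in that deduction. The two verification points you flag, including the observation that boundedness of the spectrum is only used in the necessity direction, are accurate.
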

We recall that the von Neumann algebra generated by a POVM $F$ is the von Neumann algebra generated by the set $\{F(\Delta),\,\Delta\in\mathcal{B}(\mathbb{R})\}$. 
\begin{definition}
If $A$ and $F$  in theorem \ref{Cha} generate the same von Neumann algebra then $A$ is named the sharp version of $F$.
\end{definition}
\begin{theorem}\label{unique}\cite{B4,B3}
The sharp version $A$ is unique up to almost everywhere bijections. 
\end{theorem}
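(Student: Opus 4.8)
The plan is to reduce the uniqueness of the sharp version to the ordinary Borel functional calculus of a single bounded self-adjoint operator. Let $A$ and $A'$ both be sharp versions of $F$. By definition each of them generates, together with $\{F(\Delta):\Delta\in\mathcal{B}(\mathbb{R})\}$, the same von Neumann algebra, so, writing $W^*(\cdot)$ for the generated von Neumann algebra, $W^*(A)=W^*(F)=W^*(A')=:\mathcal{W}$, which is abelian. In particular $A'\in\mathcal{W}=W^*(A)$, and since $W^*(A)$ is exactly the set of all $g(A)$ with $g$ a bounded Borel function on $\sigma(A)$, there is such a $\phi$ with $A'=\phi(A)=\int\phi(\lambda)\,dE^A_\lambda$. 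Symmetrically, $A\in\mathcal{W}=W^*(A')$ gives a bounded Borel $\psi$ on $\sigma(A')$ with $A=\psi(A')$.

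Next I would compose these two representations. From $A=\psi(A')=(\psi\circ\phi)(A)$ and the injectivity of the functional calculus ($h_1(A)=h_2(A)$ forces $h_1=h_2$ $E^A$-almost everywhere) we get $\psi\circ\phi=\mathrm{id}$ $E^A$-a.e.\ on $\sigma(A)$, and likewise $\phi\circ\psi=\mathrm{id}$ $E^{A'}$-a.e.\ on $\sigma(A')$. Moreover, because $A'=\phi(A)$, for every $x$ the image of the measure $\langle E^A(\cdot)x,x\rangle$ under $\phi$ is $\langle E^{A'}(\cdot)x,x\rangle$, so $\phi$ maps $E^A$-null sets into $E^{A'}$-null sets and $\psi$ does the reverse. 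Deleting from $\sigma(A)$ the $E^A$-null set on which $\psi\circ\phi\neq\mathrm{id}$ and the $\phi$-preimage of the $E^{A'}$-null exceptional set for $\phi\circ\psi$, one is left with a Borel set $\Gamma_A\subset\sigma(A)$ of full $E^A$-measure on which $\phi$ is injective with two-sided inverse $\psi$, carrying it onto a Borel set $\Gamma_{A'}\subset\sigma(A')$ of full $E^{A'}$-measure. Thus $A'=\phi(A)$ with $\phi$ an almost everywhere bijection, which is precisely the assertion of uniqueness up to almost everywhere bijections.

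I expect the only delicate point to be this null-set bookkeeping, together with the matching of the several notions of ``almost everywhere'' in play: the functional calculus produces statements a.e.\ with respect to the spectral measures $E^A$ and $E^{A'}$, while the intended reading of the theorem is a.e.\ with respect to $F$. Here the hypothesis $W^*(A)=W^*(F)=W^*(A')$ is used essentially --- together with the fact, from Theorem \ref{Cha}, that the Feller kernel accompanying $A$ separates the points of its domain --- to conclude that $E^A$, $E^{A'}$ and $F$ have, in the appropriate sense, the same null sets, so that the three notions coincide. A convenient way to organize the whole argument is to fix a spatial isomorphism $\mathcal{W}\cong L^\infty(Y,\nu)$ for a suitable standard measure space $(Y,\nu)$: under it $A$, $A'$ and the operators $F(\Delta)$ become functions $a,a',f_\Delta\in L^\infty(Y,\nu)$, the condition $W^*(A)=\mathcal{W}$ says precisely that $a$ separates the points of $Y$ modulo $\nu$-null sets (and similarly for $a'$), and then $\phi:=a'\circ a^{-1}$ is manifestly a $\nu$-a.e.\ bijection of the essential range of $a$ onto that of $a'$ with $a'=\phi\circ a$, i.e.\ $A'=\phi(A)$; transporting this conclusion back through the isomorphism finishes the proof.
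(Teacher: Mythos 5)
The paper does not prove Theorem \ref{unique}; it imports it from \cite{B4,B3}, so there is no in-text proof to compare against. Your argument is, as far as I can reconstruct it, the same one used in those references, and it is essentially correct: since $W^*(A)=W^*(F)=W^*(A')$ is abelian, each sharp version is a bounded Borel function of the other, $A'=\phi(A)$ and $A=\psi(A')$; the composition rule for the Borel functional calculus together with the fact that $h_1(A)=h_2(A)$ forces $h_1=h_2$ $E^A$-a.e.\ gives $\psi\circ\phi=\mathrm{id}$ $E^A$-a.e.\ and $\phi\circ\psi=\mathrm{id}$ $E^{A'}$-a.e., and the pushforward identity $E^{A'}(\Delta)=E^A(\phi^{-1}(\Delta))$ lets you discard a single $E^A$-null set and obtain the desired a.e.\ bijection. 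Three small points deserve to be made explicit. First, the identification $W^*(A)=\{g(A):g\ \hbox{bounded Borel}\}$, on which the whole argument rests, requires $\mathcal{H}$ to be separable (otherwise $\{A\}''$ can be strictly larger); separability is tacit in the paper, but you should state it. Second, the claim that $\Gamma_{A'}=\phi(\Gamma_A)$ is Borel is not automatic for Borel images; it follows from the Lusin--Suslin theorem precisely because $\phi$ is injective on $\Gamma_A$, so this should be invoked after the null-set surgery, and fullness of $E^{A'}$-measure then follows from $\phi^{-1}(\phi(\Gamma_A))\supseteq\Gamma_A$. Third, your closing worry about reconciling $E^A$-, $E^{A'}$- and $F$-null sets is a red herring: the bijection $\phi$ lives on $\sigma(A)$, so the only ``almost everywhere'' that the statement can refer to is with respect to $E^A$ (equivalently, after transport, $E^{A'}$), and that is exactly what your main argument delivers; the $L^\infty(Y,\nu)$ repackaging in the last paragraph is a legitimate alternative but adds nothing that the functional-calculus argument does not already give.
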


\begin{definition}
A Markov kernel $\mu_{(\cdot)}(\cdot):[0,1]\times\mathcal{B}(\mathbb{R})\to[0,1]$ is said to be strong Feller if  $\mu_{\Delta}$ is a continuous function for each $\Delta\in\mathcal{B}(\mathbb{R})$.
\end{definition}
\begin{definition}
We say that a commutative POVM admits a strong Feller Markov kernel if there exists a strong Feller Markov kernel $\mu$ such that $F(\Delta)=\int \mu_{\Delta}(x)\,dE_{x}$, where $E$ is the sharp reconstruction of $F$.
\end{definition}
\begin{theorem}\cite{B1}\label{uni}
A commutative POVM $F:\mathcal{B}(\mathbb{R})\to\mathcal{F(H)}$ admits a strong Feller Markov kernel if and only if it is uniformly continuous.
\end{theorem}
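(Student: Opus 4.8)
The plan is to prove the two implications separately, using the characterization of uniform continuity from below (Proposition \ref{up}) together with the smearing representation $F(\Delta)=\int_\Gamma \mu_\Delta(\lambda)\,dE_\lambda$ guaranteed by Theorem \ref{Cha}, in which $E$ is the sharp version and $\Gamma\subset\sigma(A)\subset[0,1]$ is compact. The key analytic fact I would isolate first is the identity $\|F(\Delta)\|=\mathrm{ess\,sup}_{\lambda\in\Gamma}\,\mu_\Delta(\lambda)$, where the essential supremum is taken with respect to the spectral measure $E$; this holds because $\mu_\Delta(A)$ is a function of the self-adjoint operator $A$ and the norm of $g(A)$ is the $E$-essential sup of $|g|$ on $\sigma(A)$, and here $0\le\mu_\Delta\le 1$. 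When $\mu_\Delta$ is continuous and $\Gamma=\mathrm{supp}(E)$ is compact, this essential sup is an actual maximum over $\Gamma$.

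For the ``if'' direction, assume $F$ admits a strong Feller Markov kernel, i.e. $\mu_\Delta$ is continuous on $[0,1]$ for \emph{every} $\Delta\in\mathcal{B}(\mathbb{R})$. I would verify uniform continuity from below: given $\Delta_i\uparrow\Delta$, set $g_i:=\mu_\Delta-\mu_{\Delta_i}=\mu_{\Delta\setminus\Delta_i}\ge 0$. Each $g_i$ is continuous on the compact set $\Gamma$ (being a difference of strong-Feller values, or directly $\mu_{\Delta\setminus\Delta_i}$), the sequence $g_i$ decreases pointwise to $0$ on $\Gamma$ since $\mu_{(\cdot)}(\lambda)$ is a genuine measure and $\Delta\setminus\Delta_i\downarrow\emptyset$, and hence by Dini's theorem $g_i\to 0$ uniformly on $\Gamma$. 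Then $\|F(\Delta)-F(\Delta_i)\|=\max_{\lambda\in\Gamma}g_i(\lambda)\to 0$. By Proposition \ref{up} this gives uniform continuity of $F$.

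For the ``only if'' direction, assume $F$ is uniformly continuous. Theorem \ref{Cha} already furnishes a Feller kernel $\mu$ with $\mu_\Delta$ continuous for $\Delta$ in a generating ring $\mathcal{R}$; the task is to upgrade continuity of $\mu_\Delta$ to \emph{all} Borel $\Delta$ (after possibly modifying $\mu$ on an $E$-null set). The strategy is a monotone-class / approximation argument: for arbitrary $\Delta$, choose $\Delta_i\in\mathcal{R}$ (or finite unions from $\mathcal{R}$) with $F(\Delta_i)\to F(\Delta)$ in norm — uniform continuity is exactly what lets us do this, via regularity of $F$ (the Proposition on regularity) and the uniform approximation of $F(\Delta)$ by $F$ of ring elements. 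Since $\|F(\Delta_i)-F(\Delta_j)\|=\max_\Gamma|\mu_{\Delta_i}-\mu_{\Delta_j}|$, the continuous functions $\mu_{\Delta_i}$ form a Cauchy sequence in $C(\Gamma)$, so they converge uniformly to a continuous limit $h$; one then checks $h=\mu_\Delta$ $E$-a.e., and redefines $\mu_\Delta$ to be $h$. Consistency of these redefinitions across all $\Delta$ (so that the redefined $\mu$ is still a Markov kernel, i.e. $\mu_{(\cdot)}(\lambda)$ remains a measure for every $\lambda\in\Gamma$, not just a.e.) is the delicate point; here one uses that $\Gamma$ is second countable so countably many defining relations suffice, together with the point-separation already in Theorem \ref{Cha}.

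I expect the main obstacle to be precisely this last bookkeeping in the ``only if'' direction: turning the a.e. statements produced by the operator-norm identity into an everywhere-defined strong Feller kernel, and confirming that the limiting object is still additive and normalized as a kernel at every point of $\Gamma$. The ``if'' direction, by contrast, is essentially just Dini's theorem once the norm-formula $\|F(\Delta)\|=\max_\Gamma\mu_\Delta$ is in place.
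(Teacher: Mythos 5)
The paper does not prove Theorem \ref{uni}; it is quoted from Ref.~\cite{B1}, so your argument can only be judged on its own terms. On those terms it is essentially sound. The ``if'' direction is complete as you state it: with a genuine (not weak) Markov kernel, $\mu_{\Delta_i}(\lambda)\uparrow\mu_{\Delta}(\lambda)$ holds at \emph{every} $\lambda$, Dini's theorem applies on the compact set $[0,1]$ (or $\overline{\Gamma}$), and $\|F(\Delta)-F(\Delta_i)\|\leq\sup_{\lambda}|\mu_{\Delta}(\lambda)-\mu_{\Delta_i}(\lambda)|$ together with Proposition \ref{up} finishes it. For the ``only if'' direction your plan (norm-approximate an arbitrary Borel set by ring elements, use $\|g(A)\|=\sup_{\operatorname{supp}E}|g|$ for continuous $g$ to get a Cauchy sequence in $C(\operatorname{supp}E)$, and take the uniform limit as the new $\mu_{\Delta}$) is the right one; note that the approximation step itself needs a small $\sigma$-algebra argument in the pseudometric $d(\Delta,\Delta')=\|F(\Delta\triangle\Delta')\|$, whose closure under countable unions and complements is exactly where uniform continuity enters.

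The ``delicate point'' you flag at the end is real but resolves more cleanly than you suggest, and not via second countability. Work on $\Gamma=\operatorname{supp}E$: two continuous functions that agree $E$-almost everywhere agree at every point of $\operatorname{supp}E$, so each finite-additivity relation $h_{\Delta_1\sqcup\Delta_2}=h_{\Delta_1}+h_{\Delta_2}$, as well as $h_{\mathbb{R}}=1$ and $h_{\Delta}\geq 0$, holds \emph{everywhere} on $\Gamma$, not merely a.e. Countable additivity at every point then follows because $\sum_{n\leq N}h_{\Delta^{(n)}}=h_{\sqcup_{n\leq N}\Delta^{(n)}}\to h_{\sqcup_n\Delta^{(n)}}$ \emph{uniformly} on $\Gamma$ (this is again uniform continuity of $F$ plus the norm identity), so each of the uncountably many additivity relations is verified exactly rather than up to a null set, and no union of null sets ever has to be taken. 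The only residual technicality is extending the kernel from $\operatorname{supp}E$ to all of $[0,1]$ as required by the paper's definition of a strong Feller kernel (a pointwise retraction onto $\operatorname{supp}E$ need not exist, but an affine interpolation of $\mu_{(\cdot)}(a)$ and $\mu_{(\cdot)}(b)$ across each gap $(a,b)$ of the support does the job). With these points supplied your proposal constitutes a correct proof.
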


\subsection{The meaning of uniformly continuous POVMs }
\noindent
In the present subsection we analyze the physical meaning of the smearing in the case of uniformly continuous POVMs. 

Let $E^Q(\Delta)$ be the spectral resolution of the position operator. We recall that $\langle\psi,E^Q(\Delta)\psi\rangle$ is interpreted as the probability that a perfectly accurate measurement (sharp measurement) of the position gives a result in $\Delta$.
\noindent
Then, a possible interpretation of equation (\ref{Q}) is that the POVM $F^Q$ is a randomization of $Q$. Indeed \cite{Prugovecki}, the outcomes of the measurement of the position of a particle depend on the measurement imprecision 
so that, if the sharp value of the outcome of the measurement of $Q$ is $x$ then the apparatus produces with probability $\mu_{\Delta}(x)$ a reading in $\Delta$. 

\noindent
It is worth noting that the Markov kernel 
\begin{equation*}
\mu_{\Delta}(x)=\frac{1}{l\,\sqrt{2\,\pi}}\int_{\Delta}e^{-\frac{(x-y)^2}{2\,l^2}}\,d y 
\end{equation*}
\noindent
 in equation (\ref{Q}) above is such that the function $x\mapsto\mu_{\Delta}(x)$ is continuous for each interval $\Delta\in\mathcal{B}(\mathbb{R})$. (See subsection \ref{position}) The continuity of $\mu_{\Delta}$ means that if two sharp values $x$ and $x'$ are very close to each other then, the corresponding random diffusions are very similar, i.e., the probability to get a result in $\Delta$ if the sharp value is $x$ is very close to the probability to get a result in $\Delta$ if the sharp value is $x'$. That is quite common in important  physical applications and  seems to be reasonable from the physical viewpoint. It is then natural to look for general conditions which ensure the continuity of $x\mapsto\mu_{\Delta}$. It can be proved \cite{B1} that, in general, the continuity does not hold for all the Borel sets $\Delta$ but only for a ring of subsets which generates the Borel $\sigma$-algebra of the reals. (Anyway, that is sufficient to prove the weak convergence of $\mu_{(\cdot)}(x)$ to $\mu_{(\cdot)}(x')$.) 

That $x\mapsto\mu_{\Delta}$ cannot  in general be continuous for each Borel set $\Delta$ is already established by theorem \ref{uni} which states that the map $x\mapsto\mu_{\Delta}$ is continuous for each $\Delta\in\mathcal{B}(\mathbb{R})$ if and only if the POVM $F$ is uniformly continuous.   Therefore, we can say that theorem \ref{uni} suggests an interpretation of the uniformly continuous POVMs; i.e, $F$ is uniformly continuous if and only if the smearing $\mu_{\Delta}(A)=F(\Delta)$ can be realized by a Markov kernel $\mu$ with the property that, for any $\Delta\in\mathcal{B}(\mathbb{R})$, the probability to get a result in $\Delta$ if the sharp value is $x$ is very close to the probability to get a result in $\Delta$ if the sharp value is $x'$. 



\section{Analysis of some relevant physical examples}

\subsection{Bounded position operator}
Let us consider the unsharp position operator defined as follows.
\begin{align}
Q^f(\Delta)&:=\int_{[0,1]}\mu_{\Delta}(x)\,dQ_x,\quad\Delta\in\mathcal{B}(\mathbb{R}),\\
\mu_{\Delta}(x)&:=\int_{\mathbb{R}}\chi_{\Delta}(x-y)\, f(y)\,dy,\quad x\in[0,1]\notag
\end{align}
\noindent
where, $f$ is a positive, bounded, continuous function such that $f(x)=0$, $x\notin [0,1]$, 

$$\int_{[0,1]} f(x)dx=1,$$ 
\noindent
and $Q_x$ is the spectral measure corresponding to the position operator
\begin{align*}
Q:L^2([0,1])&\to L^2([0,1])\\
\psi(x)&\mapsto Q\psi:=x\psi(x)
\end{align*}
\noindent
$Q^f$ is absolutely continuous with respect to the measure 
$$\nu(\Delta)=M\int_{\Delta\cap[-1,1]}dx.$$ 
\noindent
Indeed, for each $\psi\in\mathcal{H}$, $|\psi|^2=1$,
\begin{align*}
\langle\psi,Q^f(\Delta)\psi\rangle=\int_{[0,1]}\mu_{\Delta}(x)\,\psi^2(x)\,dx\leq M\int_{\Delta\cap[-1,1]}dx
\end{align*}
\noindent
where, the inequality
$$\mu_{\Delta}(x)=\int_{\Delta} f(x-y)\,dy\leq M\int_{\Delta\cap[-1,1]} dx$$
\noindent
has been used.

\noindent
Therefore, by theorem \ref{abs}, $Q^f(\Delta)$ is uniformly continuous. Moreover, the continuity of $f$ assures the continuity of $\mu_{\Delta}$ for each $\Delta\in\mathcal{B}(\mathbb{R})$ so that $\mu$ is a Feller Markov kernel. At last, the norm-1 property of $Q^f$ is forbidden by theorem \ref{norm1}.

\subsection{Phase observable}
Following Ref. \cite{He}, we use the following representation of the phase observable 
$$E:\mathcal{B}[0,2\pi)\to\mathcal{F(H)},$$
\begin{equation}
E(\Delta)=\sum_{m,n=1}^\infty\langle\psi_n\vert\psi_m\rangle\frac{1}{2\pi}\int_{\Delta}e^{i(n-m)x}\,dx\,\vert n\rangle\langle m\vert
\end{equation}
\noindent
where $\vert\psi_n\rangle$ is a sequence of unit vectors in $\mathcal{H}$.

The POVM $E$ just defined is covariant with respect to the phase shift operator; i.e.,
$$e^{iN\theta}E(\Delta)e^{-iN\theta}=E(\Delta\oplus\theta)$$
\noindent
where, the symbol $\oplus$ denotes addition modulo $2\pi$.

1) First, we analyze the case $\langle\psi_n\vert\psi_m\rangle=\delta_{n,m}$, $n,m\neq s,t$, $s\neq t$, $|\langle\psi_s\vert\psi_t\rangle|<1$. We have
$$E_1(\Delta)=\frac{1}{2\pi}|\Delta|\,\mathbf{1}+\frac{1}{2\pi}\langle\psi_s\vert\psi_t\rangle\int_{\Delta}e^{i(s-t)x}\,dx\,\vert s\rangle\langle t\vert+\frac{1}{2\pi}\langle\psi_t\vert\psi_s\rangle\int_{\Delta}e^{i(t-s)x}\,dx\,\vert t\rangle\langle s\vert$$
\noindent
where, $|\Delta|$ is the Lebesgue measure of $\Delta$. We notice that $E_1$ is absolutely continuous with respect to the Lebesgue measure on $[0,2\pi)$. Indeed,  
\begin{align}
\|E_1(\Delta)\|\leq\frac{1}{2\pi}|\Delta|+2\frac{1}{2\pi}|\Delta|\,\|(\vert s\rangle\langle t\vert)\|\leq \frac{3}{2\pi}|\Delta|.
\end{align}
\noindent
Therefore, by theorem \ref{abs}, $E_1$ is uniformly continuous and, by theorem \ref{norm1}, it cannot have the norm-1 property since $|\{x\}|=0$ for each $x\in[0,2\pi)$. 

2) If $\psi_n=\psi$, $\forall n\in\mathbb{N}$, we have the canonical phase observable 
$$E_{can}(\Delta)=\frac{1}{2\pi}|\Delta|\,\mathbf{1}+\frac{1}{2\pi}\sum_{n\neq m}\int_{\Delta}e^{i(n-m)x}\,dx\,\vert n\rangle\langle m\vert.$$

\noindent
In \cite{He} it is proved that $E_{can}(\Delta)$ has the norm-1 property. Moreover, we notice that $E_{can}(\{x\})=\mathbf{0}$ for each $x\in X$. Therefore, by theorem \ref{norm1}, $E_{can}(\Delta)$ cannot be uniformly continuous.

Finally, we remark that the phase space observables we analyzed in items 1) and 2) are not commutative.

\subsection{Unsharp number observable}
The unsharp number observable describes a photon detector with efficiency $\epsilon$ less than 1, and is represented by the commutative POVM 
\begin{equation}\label{number}
F_n^{\epsilon}:=\sum_{m=n}^{\infty}
\begin{pmatrix}
m\\
n
\end{pmatrix}
\epsilon^n (1-\epsilon)^{m-n}\vert m\rangle\langle m\vert.
\end{equation}
Notice that \cite{B7} $F_n^{\epsilon}$ is an unsharp version of the number operator $N$. That can be seen by introducing the functions
\begin{equation}\label{Markov}\mu_n(m)=
\begin{cases}
0 & if \quad n>m\\
\begin{pmatrix}
m\\
n
\end{pmatrix}
\epsilon^n (1-\epsilon)^{m-n} & if \quad n\leq m
\end{cases}
\end{equation}
\noindent
which are such that

$$F_n^{\epsilon}=\mu_n(N),$$
\begin{equation}\label{mu}
\sum_{n=0}^{\infty}\mu_n(m)=1,\quad\forall m\in\mathbb{N},
\end{equation}
\noindent
\noindent

\noindent
For each $n$, $F_n^\epsilon$ is compact.\footnote{Since \begin{equation*}
\begin{split}
\lim_{m\to\infty}
\begin{pmatrix}
m\\
n
\end{pmatrix}
\epsilon^n (1-\epsilon)^{m-n}=\lim_{m\to\infty}\frac{\epsilon^n}{n!}\, [m(1-\epsilon)^{m-n/n} ]\cdots[(m-n)(1-\epsilon)^{m-n/n} ]=0,
\end{split}
\end{equation*}
$F_n^{\epsilon}$ is compact for every $n\in\mathbb{N}$ (see pages 234-235 in \cite{Nagy}).} Therefore, the unsharp number observable is not uniformly continuous. Indeed, being the space of compact operators closed with respect to the uniform operator topology, the uniform continuity would imply that $\mathbf{1}=u-\lim_{n\to\infty}\sum_{i=1}^{n}F_i^\epsilon$ which is not possible since $\mathbf{1}$ is not compact.
\noindent
Moreover, $F_n^\epsilon$ has not the norm-1 property. 
 Indeed, if $\|F_n^\epsilon\|=1$ then (see theorem VI.6 in \cite{Reed}) $\sup_{\lambda\in\sigma(F_n^\epsilon)}|\lambda|=1$ and, by the compactness of the spectrum, $1\in\sigma(F_n^\epsilon)$. On the other hand, the compactness of $F_n^\epsilon$ implies that if $\|F_n^\epsilon\|=1$ then, $1$ is an eigenvalue of $F_n^\epsilon$ but that contradicts equation (\ref{number}) where the eigenvalues of $F_n^\epsilon$ are  $0$ and

\begin{equation*}
\begin{pmatrix}
m\\
n
\end{pmatrix}
\epsilon^n (1-\epsilon)^{m-n},\quad m\geq n,
\end{equation*}
which, by equation (\ref{mu}), are strictly less than $1$. The only effect with eigenvalue $1$ is $F_0^{\epsilon}$.

\subsection{Position and Momentum}\label{position}

\noindent
In the following $\mathcal{H}=L^2(\mathbb{R})$, $Q$ and $P$ denote the sharp position and momentum observables respectively while $\ast$ denotes convolution, i.e. $(f\ast g)(x)=\int f(y)g(x-y)d y$. 

\noindent
Let us consider the joint position-momentum POVM \cite{Ali,Busch,Davies,Holevo,Prugovecki,Schroeck} 
\begin{equation*} \label{phase}
F(\Delta\times\Delta')=\int_{\Delta\times\Delta'}U_{q,p}\,\gamma\,U^*_{q,p}\,d q\, d p
\end{equation*}
where, $U_{q,p}=e^{-iqP}e^{ipQ}$ and $\gamma=\vert f\rangle\langle f\vert$, $f\in L^2(\mathbb{R})$, $\|f\|_2=1$.
The marginal
\begin{equation}
\label{approximate}
Q^{f}(\Delta):=F(\Delta\times\mathbb{R})=\int_{-\infty}^{\infty}({\bf 1}_{\Delta}\ast\vert f\vert^2)(x)\,dQ_x,\quad\Delta\in\mathcal{B}(\mathbb{R}),
\end{equation}
is an unsharp position observable while the map $\mu_{\Delta}(x):={\bf 1}_{\Delta}\ast \vert f(x)\vert^2$ defines a Markov kernel. 

\noindent
First we notice that $Q^f$ is absolutely continuous with respect to the Lebesgue measure. Indeed,
\begin{align*} 
Q^f(\Delta)=F(\Delta\times\mathbb{R})&=\int_{\Delta\times\mathbb{R}}U_{q,p}\,\gamma\,U^*_{q,p}\,d q\, d p\\
&=\int_{\Delta}\,dq\int_{\mathbb{R}} U_{q,p}\,\gamma\,U^*_{q,p}\,d p\\
&=\int_{\Delta}\widehat{Q}(q)\,dq\leq\int_{\Delta}\mathbf{1}\,dq
\end{align*}

\noindent
where, 
$$\widehat{Q}(q)=\int_{\mathbb{R}}U_{q,p}\,\gamma\,U^*_{q,p}\,dp.$$
\noindent
Then, by theorem \ref{abc}, $Q^f$ is uniformly continuous at each Borel set $\Delta$ with finite Lebesgue measure and, by theorem \ref{norm2}, $Q^f$ cannot have the norm-1 property since $\|Q^f(\{x\})\|\leq|\{x\}|=0$.

Moreover, it is worth remarking that, the uniform continuity does not hold in general and that there are sets for which the norm-1 property is satisfied. That can be shown by working out the details of the following particular relevant case. Let us set

$$f^2(x)=\frac{1}{l\,\sqrt{2\,\pi}}\,e^{(-\frac{x^2}{2\,l^2})},\quad l\in\mathbb{R}-\{0\},$$ 
\noindent
in (\ref{approximate}). The corresponding unsharp position POVM is

\begin{align*}
Q^{f}(\Delta)&=\int_{-\infty}^{\infty}\Big(\int_{\Delta}\vert f(x-y)\vert^2)\,d y\Big)\,dQ_x\\
&=\frac{1}{l\,\sqrt{2\,\pi}}\int_{-\infty}^{\infty}\Big(\int_{\Delta}e^{-\frac{(x-y)^2}{2\,l^2}}\,d y\Big)\,dQ_x=\int_{-\infty}^{\infty}\mu_{\Delta}(x)\,dQ_x
\end{align*}
\noindent
where, 
\begin{equation}\label{PM}
\mu_{\Delta}(x)=\frac{1}{l\,\sqrt{2\,\pi}}\int_{\Delta}e^{-\frac{(x-y)^2}{2\,l^2}}\,d y 
\end{equation}
\noindent
defines a Markov kernel.

\noindent
Now, we consider the family of sets $\Delta_i=(-\infty, a_i)$, $\lim_{i\to \infty}a_i=-\infty$ such that $\Delta_i\downarrow\emptyset$, and prove that $\|Q^f(\Delta_i)\|=1$, $\forall i\in\mathbb{N}$. For each  $i\in\mathbb{N}$,
\begin{align*}
\lim_{x\to-\infty}\mu_{\Delta_i}(x)&=\lim_{x\to-\infty}\frac{1}{l\,\sqrt{2\,\pi}}\int_{\Delta_i}e^{-\frac{(x-y)^2}{2\,l^2}}\,d y\\
&=\lim_{x\to-\infty}\frac{1}{l\,\sqrt{2\,\pi}}\int_{(-\infty,\, a_i-x)}e^{-\frac{y^2}{2\,l^2}}\,d y=\frac{1}{l\,\sqrt{2\,\pi}}\int_{-\infty}^{\infty}e^{-\frac{y^2}{2\,l^2}}\,d y=1.  
\end{align*}
\noindent
Therefore, if 
$$\psi_n=\chi_{[-n,-n+1]}(x),$$
\begin{align}\label{n}
\lim_{n\to\infty}\langle\psi_n,Q^f(\Delta_i)\psi_n\rangle&=\lim_{n\to\infty}\int_{-\infty}^{\infty}\mu_{\Delta_i}(x)\vert\psi_n(x)\vert^2\,dx\\\nonumber
&=\lim_{n\to\infty}\int_{[-n,-n+1]}\mu_{\Delta_i}(x)\,dx=1.
\end{align} 
\noindent
so that,  $\|Q^f(\Delta_i)\|=1$, $\forall i\in\mathbb{N}$, $\lim_{i\to\infty}\|Q^f(\Delta_i)\|=1$ and, by proposition \ref{down}, $Q^f$ cannot be uniformly continuous. 

A second consequence of equation (\ref{n}) is that $Q^f$ has the norm-1 property at each set $(-\infty, a)$, $a\in\mathbb{R}$. 

At last, it is worth noticing \cite{B1} that $\mu_{\Delta}$ is continuous for each interval $\Delta$. Indeed,  
\begin{align*}
\vert\mu_{\Delta}(x)-\mu_{\Delta}(x')\vert&=\frac{1}{l\,\sqrt{2\,\pi}}\Big\vert \int_{\Delta}e^{-\frac{(x-y)^2}{2\,l^2}}\,dy-\int_{\Delta}e^{-\frac{(x'-y)^2}{2\,l^2}}\,dy \Big\vert\\
&=\frac{1}{l\,\sqrt{2\,\pi}}\Big\vert\int_{\Delta_x}e^{-\frac{(y)^2}{2\,l^2}}-\int_{\Delta_{x'}} e^{-\frac{(y)^2}{2\,l^2}}\,dy\Big\vert\leq\frac{1}{l\,\sqrt{2\,\pi}}\Big\vert\int_{\overline{\Delta}} e^{-\frac{(y)^2}{2\,l^2}}\,dy\Big\vert
\end{align*}
\noindent
where, 
$$\Delta_x=\{z\in\mathbb{R}\,\vert\,z=y-x,\,y\in\Delta\},\quad\Delta_{x'}=\{z\in\mathbb{R}\,\vert\,z=y-x',\,y\in\Delta\}$$ 
\noindent
and,
$$\overline{\Delta}=(\Delta_x-\Delta_{x'})\cup(\Delta_{x'}-\Delta_x).$$

\noindent
Therefore, $\vert x-x'\vert\leq\epsilon$ implies,
\begin{equation*}
\vert\mu_{\Delta}(x)-\mu_{\Delta}(x')\vert\leq\frac{1}{l\,\sqrt{2\,\pi}}\Big\vert\int_{\overline{\Delta}} e^{-\frac{(y)^2}{2\,l^2}}\,dy\Big\vert\leq\frac{1}{l\,\sqrt{2\,\pi}}\,\int_{\overline{\Delta}}\,dy=\frac{\sqrt{2}}{l\,\sqrt{\pi}}\,\epsilon.
\end{equation*}

\end{document}